\documentclass[journal]{IEEEtran}

\usepackage[utf8]{inputenc}
\usepackage[T1]{fontenc}
\usepackage{amsmath,amssymb,amsthm}
\usepackage{mathtools}
\usepackage{bm}
\usepackage{graphicx}
\usepackage{booktabs}
\usepackage{enumitem}
\usepackage[colorlinks=true,linkcolor=blue,citecolor=blue,urlcolor=blue]{hyperref}
\usepackage{cleveref}
\usepackage{pdfpages}

\theoremstyle{plain}
\newtheorem{theorem}{Theorem}

\theoremstyle{definition}

\theoremstyle{remark}
\newtheorem{remark}{Remark}

\newcommand{\Phihat}{\hat{\bm{\Phi}}}
\newcommand{\Gam}{\bm{\Gamma}}
\newcommand{\X}{\mathbf{X}}
\newcommand{\D}{\mathbf{D}}
\newcommand{\W}{\mathbf{W}}
\newcommand{\G}{\mathbf{G}}
\newcommand{\I}{\mathbf{I}}
\newcommand{\M}{\mathbf{M}}
\newcommand{\SigW}{\bm{\Sigma}_W}
\newcommand{\Smat}{\mathbf{S}}
\newcommand{\tr}{\mathrm{Tr}}

\newcommand{\argmin}{\operatorname{argmin}}
\newcommand{\hsp}[3]{\hspace{#1pt}#3\hspace{#2pt}}

\newlength{\myfigwidth}
\newlength{\doublefigwidth}
\begin{document}

\title{Learning Regularization Structure for Biosignal Template Estimation}

\author{Yonathan Murin, Alexandre Gramfort%
\thanks{The authors are with Reality Labs, Meta Platforms, Inc (e-mail: moriny@meta.com; agramfort@meta.com). Code to reproduce all experiments is available at \url{https://github.com/facebookresearch/Learning_Regularization_Structure_for_Biosignal_Template_Estimation}.}
\thanks{This work has been submitted to the IEEE for possible publication. Copyright may be transferred without notice, after which this version may no longer be accessible.}\vspace{-4mm}}
\date{}

\maketitle

\begin{abstract}
Estimating event-locked templates from bio-signal recordings via regularized least-squares requires choosing both the regularization
structure and its magnitude, choices that are typically made heuristically.
We develop a data-driven framework based on Stein's Unbiased Risk Estimate (SURE) that jointly optimizes both.
By parameterizing the regularization operator as a convolution kernel, our method learns the penalty structure directly from the data,
combining smoothness enforcement with ridge-like shrinkage in a way that cannot be achieved by scaling a fixed difference operator.
While standard SURE assumes white noise, biosignal noise exhibits temporal autocorrelation.
We therefore extend SURE to colored noise by replacing its scalar trace term with a structured correction based on the noise covariance matrix.
For AR(1) noise, this correction requires only two parameters, the noise variance and the lag-1 autocorrelation, both estimable from pre-event baselines.
Cross-modality validation on auditory event-related potentials, P300 brain--computer interface data,
and ECG morphology demonstrates consistent gains compared to alternative methods,
all at $K=5$ events per class - the regime most relevant for rapid calibration and personalization.
\end{abstract}

\begin{IEEEkeywords}
Template estimation, Tikhonov regularization, SURE, colored noise, event-related potentials, EEG, ECG, biosignal processing
\end{IEEEkeywords}

\vspace{-3mm}
\section{Introduction}
\label{sec:introduction}

Estimating event-locked templates from noisy biosignal recordings is a foundational challenge spanning multiple biomedical modalities,
including EEG~\cite{luck2014introduction}, MEG~\cite{hamalainen1993magnetoencephalography}, ECG~\cite{moody2001mitbih}, EMG~\cite{farina2004extraction,englehart2003robust,kaifosh2025}, and fMRI~\cite{dale1999optimal}.
Such templates are known as event-related potentials (ERPs) in EEG, event-related fields (ERFs) in MEG,
or hemodynamic response functions (HRFs) in fMRI.
Across these domains, rapid calibration from few ($K$) events is highly desirable for personalization and bedside monitoring,
making template estimation quality in the low-data regime a critical practical concern.
The classical approach of signal averaging improves the signal-to-noise ratio (SNR) proportionally to $\sqrt{N}$~\cite{dawson1954summation},
but fails when short inter-stimulus intervals cause adjacent responses to overlap temporally, introducing systematic bias~\cite{woldorff1993adjacent}.
While single-trial denoising frameworks address trial variability~\cite{woody1967characterization,quian2003single,georgiadis2005single,davanzo2011bayesian},
they fundamentally do not resolve this overlap problem.

To handle overlapping events, an alternative approach models the continuous recording as a linear superposition of event-locked templates: $\X = \D \bm{\Phi} + \W$,
where $\D$ is a design matrix encoding event timings, $\bm{\Phi}$ contains the target templates, and $\W$ represents noise~\cite{friston1994statistical,dale1997selective,dale1999optimal,smith2015rerpI,smith2015rerpII,ehinger2019unfold}.
Though the least-squares solution naturally deconvolves overlapping contributions, implementing Tikhonov
regularization~\cite{tikhonov1977solutions} introduces the difficult challenge of parameter selection.
The regularization structure (e.g., smoothness penalty, ridge, or a combination) and its magnitude are typically chosen heuristically,
which can severely degrade estimation quality in the low-data regime where optimal regularization dictates performance.

Template estimation via regularized least squares is a linear inverse problem with several classical parameter selection strategies,
including GCV~\cite{golub1979gcv}, the L-curve method~\cite{hansen1992lcurve}, and Morozov's discrepancy principle~\cite{morozov1984methods}.
Alternatively, Stein's Unbiased Risk Estimate (SURE)~\cite{stein1981sure} targets the mean-square error (MSE) directly,
offering a powerful framework for data-driven optimization~\cite{donoho1995adapting,eldar2009gsure,ramani2008montecarlo,murin2026surede}.
Two limitations, however, restrict the applicability of these methods to biosignal template estimation.
First, all of them (including Empirical Bayes frameworks~\cite{mackay1992bayesian,friston2007variational,friston2008multiple,trujillo2004bayesian})
optimize only a scalar magnitude for an a-priori chosen penalty operator, leaving the penalty structure
(smoothness, ridge, or a combination) as a heuristic choice.
Second, standard SURE assumes independent white noise, an assumption routinely violated by the temporally correlated,
$1/f$-type noise ubiquitous in bandpass-filtered biosignals~\cite{pfurtscheller1999event}.

The second issue, colored residuals in a linear deconvolution model, has been studied extensively in the closely related general linear model (GLM)
framework for fMRI inference. Friston~et~al.~\cite{friston1995analysis} captured the residual autocorrelation with an autoregressive (AR) process
and pre-whitened both the data and the design matrix, and Worsley and Friston~\cite{worsley1995analysis} refined the resulting corrections for statistical testing.
Pre-whitening, however, transforms the design matrix and complicates the mapping between regularization in the whitened and original domains;
an issue that is acute when the regularizer itself is the object being optimized.
We instead leave the data and design matrix untouched and modify the risk estimator: an exact correction to the SURE trace term under an AR(1)
noise model, expressed through the compact $P \times P$ matrix $\D^\top\SigW\D$ (rather than a full $T \times T$ observation-space covariance),
recovers an unbiased risk estimate using only two noise parameters estimable from pre-event (e.g., pre-stimulus) baselines.

In this work, we develop a data-driven framework for optimizing Tikhonov regularization in event template estimation. Our contributions are as follows:
\begin{enumerate}
\item We derive a SURE-based objective to optimize the regularization \emph{kernel structure, not just its magnitude},
by parameterizing the operator as a convolution matrix. The learned kernels uniquely combine smoothness enforcement with ridge-like shrinkage.
\item We extend SURE to colored noise via a structured covariance correction ($\D^\top\SigW\D$) that accounts for event timing
geometry while avoiding the computational burden of full observation-space matrices.
\item We validate the method across auditory ERP, P300 BCI, and ECG datasets, demonstrating consistent improvements
over Empirical Bayes in the critical low-data regime ($K=5$).
\end{enumerate}

To our knowledge, this is the first work to optimize the structure of a regularization kernel via SURE while accounting for temporally correlated
noise through an event-timing-dependent covariance structure.

The remainder of this paper is organized as follows. \Cref{sec:formulation} formulates the template estimation problem.
\Cref{sec:method} presents the proposed SURE-based optimization framework, including the colored-noise extension.
\Cref{sec:simulation} provides a simulation study while \Cref{sec:real_data} validates the method on
auditory ERP, P300 BCI, and ECG data. \Cref{sec:discussion} discusses practical considerations, and \Cref{sec:conclusion} concludes.

\vspace{-3mm}
\section{Problem Formulation}
\label{sec:formulation}

\textbf{Notation.}
Boldface uppercase letters ($\X$, $\D$) denote matrices, boldface lowercase ($\bm{h}$, $\bm{x}$) denote vectors, and regular italic ($T$, $C$, $\sigma$) denote scalars. 
$\|\cdot\|_F$ is the Frobenius norm, $\tr(\cdot)$ the matrix trace, and $(\cdot)^\top$ the transpose. We write $\D^\dagger = (\D^\top\D)^{-1}\D^\top$ for the Moore-Penrose pseudoinverse. 
$\I$ denotes the identity matrix of appropriate dimension. $\mathbb{E}\{\cdot\}$ denotes expectation (commonly taken of the noise realizations).

\vspace{-3mm}
\subsection{Signal Model}

We consider a multi-dimensional linear estimation problem of the form:
\vspace{-1mm}
\begin{equation}
\X = \D \bm{\Phi} + \W,
\label{eq:model}
\end{equation}
where $\X \hsp{-2}{-2}{\in} \mathbb{R}^{T \times C}$ is the observed multichannel signal with $T$ time samples and $C$ channels,
$\D \hsp{-2}{-2}{\in} \mathbb{R}^{T \times P}$ is the design matrix encoding the timing of $M$ event types with template duration $L$ (so $P \hsp{-2}{-2}{=} ML$),
$\bm{\Phi} \hsp{-2}{-2}{\in} \mathbb{R}^{P \times C}$ contains the $M$ templates to be estimated,
and $\W \hsp{-2}{-2}{\in} \mathbb{R}^{T \times C}$ denotes measurement noise.
Each column of $\W$ is assumed to have zero mean and temporal covariance $\SigW \hsp{-2}{-2}{\in} \mathbb{R}^{T \times T}$,
and for the white-noise special case $\SigW \hsp{-2}{-2}{=} \sigma^2 \I_T$.

The design matrix $\D$ is constructed from the event occurrences:
for each event type $m$ with $N_m$ occurrences at times $\{t_1^{(m)}, \ldots, t_{N_m}^{(m)}\}$,
the corresponding block $\D_m \hsp{-2}{-2}{\in} \mathbb{R}^{T \times L}$ has entries $[\D_m]_{t_i^{(m)}+\ell, \ell} \hsp{-2}{-2}{=} 1$
for $\ell = 0, \ldots, L-1$, and zero elsewhere.
When two occurrences of the same type are separated by fewer than $L$ samples, their contributions overlap in $\D_m$, so that some rows contain ones from multiple events.
The full design matrix is formed by horizontal concatenation: $\D = [\D_1, \ldots, \D_M]$, and when events of different types also overlap temporally, 
the corresponding columns of $\D$ become correlated.

Our objective is to estimate the template matrix $\bm{\Phi}$ from the observed signal $\X$ and the known event structure encoded in $\D$.
This overlap makes $\D^\top\D$ ill-conditioned and motivates the use of regularization.

\vspace{-3mm}
\subsection{Ordinary Least Squares}

The OLS estimator minimizes the sum-of-squares of the reconstruction error:
\vspace{-1mm}
\begin{equation}
\Phihat_{\text{OLS}} \hsp{-0}{-0}{=} \argmin_{\bm{\Phi}} \| \X - \D \bm{\Phi} \|_F^2 \hsp{-0}{-0}{\triangleq} \D^\dagger \X.
\label{eq:ols}
\end{equation}
While unbiased, the OLS solution has high variance (see a detailed discussion in~\cite{eldar2009gsure}),
particularly when $\D^\top\D$ is ill-conditioned due to temporal overlap between events or insufficient data.

\vspace{-3mm}
\subsection{Tikhonov Regularization}

To achieve a better bias-variance tradeoff, one can add a Tikhonov regularization~\cite{tikhonov1977solutions} term:
\begin{equation}
\Phihat = \argmin_{\bm{\Phi}} \| \X - \D \bm{\Phi} \|_F^2 + \| \Gam \bm{\Phi} \|_F^2,
\label{eq:tikhonov}
\end{equation}
which has the closed-form solution:
\begin{equation}
\Phihat = (\D^\top \D + \Gam^\top\Gam)^{-1} \D^\top \X.
\label{eq:tikhonov_solution}
\end{equation}

The matrix $\Gam$ encodes the desired regularization structure.
Setting $\Gam \hsp{-2}{-2}{=} \alpha \I$ yields ridge regression, while $\Gam \hsp{-2}{-2}{=} \lambda \mathbf{R}$ with $\mathbf{R}$
being a finite-difference operator enforces smoothness.
In practice, both the structure and magnitude of $\Gam$ are chosen heuristically, which motivates the present work.
We denote $\Smat \hsp{-2}{-2}{=} \Gam^\top\Gam$ throughout.

\vspace{-2mm}
\section{The Proposed Method}
\label{sec:method}

\subsection{Unbiased MSE Estimator (White Noise)}
\label{sec:sure}

We seek to optimize the regularization operator $\Gam$ to minimize the mean-square estimation error. Let $\bm{x}_c$, $\bm{\phi}_c$, and $\hat{\bm{\phi}}_c$ denote the $c$-th columns of $\X$, $\bm{\Phi}$, and $\Phihat$, respectively. For each channel $c$, the regularized
estimate is $\hat{\bm{\phi}}_c = \G \bm{x}_c \in \mathbb{R}^P$, where
\vspace{-1mm}
\begin{equation}
\label{eq:G_def}
    \G(\Gam) = (\D^\top\D + \Smat)^{-1} \D^\top \in \mathbb{R}^{P \times T}
\end{equation}
In the following, to simplify notation, we use $\G$ instead of $\G(\Gam)$.
The per-column MSE is $\mathcal{J}_c \hsp{-2}{-2}{=} \mathbb{E}{|\bm{\phi}_c \hsp{-1}{-1}{-} \hat{\bm{\phi}}_c|^2}$.
Since $\mathcal{J}_c$ depends on the unknown $\bm{\phi}_c$, it cannot be computed directly.
Instead, we derive an unbiased estimator $\mathcal{J}'_c$ that depends only on observable quantities and the noise variance $\sigma^2$.
\begin{theorem}[White-noise SURE]
\label{thm:sure_white}
Under $\bm{w}_c \sim (\bm{0}, \sigma^2 \I_T)$, an unbiased estimator of $\mathcal{J}_c$ is
\vspace{-1mm}
\begin{equation}
\mathcal{J}'_c = |\bm{\phi}_c|^2 + |\hat{\bm{\phi}}_c|^2 - 2\hat{\bm{\phi}}_c^\top \D^\dagger \bm{x}_c + 2\sigma^2 \tr(\G \D^{\dagger\top}),
\label{eq:sure_white}
\end{equation}
where $\G \D^{\dagger\top} \in \mathbb{R}^{P \times P}$.
\end{theorem}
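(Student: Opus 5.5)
The plan is to expand the squared error and replace the single term that involves the unknown $\bm{\phi}_c$ together with the random estimate by an observable quantity, using the fact that the OLS estimate is unbiased. Note that the estimator $\mathcal{J}'_c$ as stated still contains $|\bm{\phi}_c|^2$. This term does not depend on $\Gam$, so it is irrelevant for optimization, and we keep it to make the identity exact.

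\textbf{Step 1: expand the risk.} We write
\[
|\bm{\phi}_c-\hat{\bm{\phi}}_c|^2 = |\bm{\phi}_c|^2 + |\hat{\bm{\phi}}_c|^2 - 2\hat{\bm{\phi}}_c^\top\bm{\phi}_c .
\]
Taking expectations, the first two terms already match the corresponding terms of $\mathcal{J}'_c$ in expectation. It therefore suffices to show
\[
\mathbb{E}\{\hat{\bm{\phi}}_c^\top\bm{\phi}_c\} = \mathbb{E}\{\hat{\bm{\phi}}_c^\top\D^\dagger\bm{x}_c\} - \sigma^2\tr(\G\D^{\dagger\top}).
\]

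\textbf{Step 2: substitute the model.} Since $\D^\dagger\D=\I_P$ (assuming $\D$ has full column rank, as implicit in the definition of $\D^\dagger$), we have
\[
\D^\dagger\bm{x}_c = \bm{\phi}_c + \D^\dagger\bm{w}_c .
\]
Hence
\[
\hat{\bm{\phi}}_c^\top\D^\dagger\bm{x}_c = \hat{\bm{\phi}}_c^\top\bm{\phi}_c + \hat{\bm{\phi}}_c^\top\D^\dagger\bm{w}_c ,
\]
and the task reduces to computing $\mathbb{E}\{\hat{\bm{\phi}}_c^\top\D^\dagger\bm{w}_c\}$. With $\hat{\bm{\phi}}_c = \G\D\bm{\phi}_c + \G\bm{w}_c$, the deterministic part contributes $\bm{\phi}_c^\top\D^\top\G^\top\D^\dagger\,\mathbb{E}\{\bm{w}_c\}=0$ because the noise has zero mean. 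The remaining part is a quadratic form,
\[
\mathbb{E}\{\bm{w}_c^\top\G^\top\D^\dagger\bm{w}_c\} = \tr(\G^\top\D^\dagger\,\sigma^2\I_T) = \sigma^2\tr(\D^\dagger\G^\top) = \sigma^2\tr(\G\D^{\dagger\top}).
\]
This uses only the second-moment assumption $\bm{w}_c\sim(\bm{0},\sigma^2\I_T)$; Gaussianity and Stein's lemma are not needed, because $\G$ does not depend on the data. Subtracting gives exactly the required identity, and the factor $2$ produces the trace term $+2\sigma^2\tr(\G\D^{\dagger\top})$.

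\textbf{Step 3: check dimensions.} $\G\D^{\dagger\top}$ is the product of a $P\times T$ matrix and a $T\times P$ matrix, so it is $P\times P$ as claimed. Equivalently, $\G\D^{\dagger\top} = (\D^\top\D+\Smat)^{-1}\D^\top\D(\D^\top\D)^{-1} = (\D^\top\D+\Smat)^{-1}$, which is worth noting for computation.

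\textbf{Main obstacle.} The argument is routine; the only delicate point is the projection step. The natural route would be to use $\D\bm{\phi}_c$ via $\bm{x}_c$, but the risk is measured in parameter space, not in observation space. The resolution is to project through $\D^\dagger$, which is legitimate because of the full-rank assumption. The remaining care is in the trace bookkeeping: use $\tr(\mathbf{A}^\top)=\tr(\mathbf{A})$ to bring the term to the stated form $\tr(\G\D^{\dagger\top})$.
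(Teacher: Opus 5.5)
Your proof is correct and follows the same route the paper indicates. You expand the squared error, replace the unknown cross term $\hat{\bm{\phi}}_c^\top\bm{\phi}_c$ by the observable $\hat{\bm{\phi}}_c^\top\D^\dagger\bm{x}_c$ using $\D^\dagger\D=\I$, and evaluate the remaining noise quadratic form with $\mathbb{E}\{\bm{w}^\top\mathbf{B}\bm{w}\}=\sigma^2\tr(\mathbf{B})$, where $\mathbf{B}$ is (the transpose of) the matrix $\mathbf{A}=\D^{\dagger\top}\G$ that the paper reuses in the colored-noise extension; your simplification $\G\D^{\dagger\top}=(\D^\top\D+\Smat)^{-1}$ is also right and agrees with the paper's $\tr(\M\D^\top\D)$ form of the trace term.
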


\begin{proof}
See supplementary material for the full proof.
\end{proof}

The first term $|\bm{\phi}_c|^2$ is independent of $\Gam$ and does not affect the optimization (minimization of the MSE).
The trace term $\tr(\G \D^{\dagger\top})$ acts as a complexity penalty: it grows with the effective number of free parameters used by the estimator.
This term is derived from the identity $\mathbb{E}\{\bm{w}^\top \mathbf{B} \bm{w}\} \hsp{-2}{-2}{=} \sigma^2 \tr(\mathbf{B})$, valid for \emph{any} matrix $\mathbf{B}$ under i.i.d.\ noise
with variance $\sigma^2$. Note that in the case of temporally correlated noise this identity breaks, as discussed in the next section.

Since $\G$ does not depend on the channel index $c$, the regularization that minimizes $\mathcal{J}'_c$ for \emph{any} single channel also minimizes
$\sum_c \mathcal{J}'_c$. Dropping the constant first term, the optimal regularization is:
\vspace{-1mm}
\begin{equation*}
\Gam_{\text{opt}} \hsp{-2}{-2}{=} \argmin_{\Gam} \hspace{-1pt} \sum_{c=1}^{C} \hspace{-1pt}
\left\{ \hspace{-1pt} |\hat{\bm{\phi}}_c|^2 \hsp{-1}{-1}{-} 2\hat{\bm{\phi}}_c^\top \D^\dagger \bm{x}_c \hsp{-1}{-1}{+} 2\sigma^2 \tr(\G\D^{\dagger\top}) \hspace{-1pt} \right\} \hspace{-1pt}.
\end{equation*}

\vspace{-4mm}
\subsection{Extension to Colored Noise}
\label{sec:colored}

Biosignal noise exhibits temporal autocorrelation, particularly after bandpass filtering.
When the noise covariance is different from $\sigma^2\I$, the white-noise SURE in Theorem~\ref{thm:sure_white} is biased.
The following theorem extends \Cref{thm:sure_white} to the case of colored-noise.
\begin{theorem}[Colored-noise SURE]
\label{thm:sure_colored}
Under $\bm{w}_c \hsp{-2}{-2}{\sim} (\bm{0}, \SigW)$ with known covariance $\SigW \hsp{-2}{-2}{\in} \mathbb{R}^{T \times T}$, an unbiased estimator of the per-column MSE
$\mathcal{J}_c \hsp{-2}{-2}{=} \mathbb{E}\{|\bm{\phi}_c - \hat{\bm{\phi}}_c|^2\}$ is
\begin{equation}
\mathcal{J}'_c = |\bm{\phi}_c|^2 + |\hat{\bm{\phi}}_c|^2 - 2\hat{\bm{\phi}}_c^\top \D^\dagger \bm{x}_c + 2\tr(\M \D^\top\SigW\D),
\label{eq:sure_colored}
\end{equation}
where $\M \hsp{-2}{-2}{=} (\D^\top\D)^{-1}(\D^\top\D + \Smat)^{-1} \in \mathbb{R}^{P \times P}$.
\end{theorem}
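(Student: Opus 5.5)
The plan is to expand the true MSE and replace the one term that involves the unknown $\bm{\phi}_c$ by an observable unbiased surrogate, exactly as in \Cref{thm:sure_white}. The only change is that the quadratic-form identity is taken under the general covariance $\SigW$.

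\textbf{Step 1: expand the risk.} Dropping the index $c$, I write
\[
\mathcal{J}=\mathbb{E}\{|\bm{\phi}|^2+|\hat{\bm{\phi}}|^2-2\hat{\bm{\phi}}^\top\bm{\phi}\}.
\]
The first two terms are already of the required form. The cross term $\hat{\bm{\phi}}^\top\bm{\phi}$ involves the unknown signal, so it must be rewritten.

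\textbf{Step 2: introduce the OLS estimate.} Since $\D^\dagger\D=\I$, we have $\D^\dagger\bm{x}=\bm{\phi}+\D^\dagger\bm{w}$, so $\bm{\phi}=\D^\dagger\bm{x}-\D^\dagger\bm{w}$. Substituting gives
\[
\hat{\bm{\phi}}^\top\bm{\phi}=\hat{\bm{\phi}}^\top\D^\dagger\bm{x}-\hat{\bm{\phi}}^\top\D^\dagger\bm{w}.
\]
For the second piece I write $\hat{\bm{\phi}}=\G\D\bm{\phi}+\G\bm{w}$. The part linear in $\bm{w}$ has zero expectation because the noise has zero mean. The remaining part gives
\[
\mathbb{E}\{\bm{w}^\top\G^\top\D^\dagger\bm{w}\}=\tr(\G^\top\D^\dagger\SigW).
\]
This uses $\mathbb{E}\{\bm{w}^\top\mathbf{B}\bm{w}\}=\tr(\mathbf{B}\SigW)$, which holds for any $\mathbf{B}$ and any zero-mean noise with covariance $\SigW$; only second moments are needed, not Gaussianity. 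Hence
\[
\mathbb{E}\{\hat{\bm{\phi}}^\top\bm{\phi}\}=\mathbb{E}\{\hat{\bm{\phi}}^\top\D^\dagger\bm{x}\}-\tr(\D^{\dagger}\SigW\G^\top).
\]
Plugging this back into Step 1 shows that $\mathcal{J}'$ with correction term $+2\tr(\D^\dagger\SigW\G^\top)$ is unbiased.

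\textbf{Step 3: reduce the trace to $P\times P$ form.} Substitute $\D^\dagger=(\D^\top\D)^{-1}\D^\top$ and $\G^\top=\D(\D^\top\D+\Smat)^{-1}$; the latter uses symmetry of $\D^\top\D+\Smat$. This gives
\[
\tr\big((\D^\top\D)^{-1}\D^\top\SigW\D(\D^\top\D+\Smat)^{-1}\big).
\]
Cyclic invariance of the trace then moves $(\D^\top\D+\Smat)^{-1}$ to the front, yielding $\tr((\D^\top\D+\Smat)^{-1}(\D^\top\D)^{-1}\D^\top\SigW\D)$.

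This is where I expect the only real obstacle: the stated $\M=(\D^\top\D)^{-1}(\D^\top\D+\Smat)^{-1}$ has the two inverses in the opposite order. I would resolve it by transposition. The trace is invariant under transpose, and $\D^\top\SigW\D$ and both inverses are symmetric, so
\[
\tr(\mathbf{A}\mathbf{B}\mathbf{C})=\tr(\mathbf{C}^\top\mathbf{B}^\top\mathbf{A}^\top)=\tr(\mathbf{C}\mathbf{B}\mathbf{A})
\]
for symmetric $\mathbf{A},\mathbf{B},\mathbf{C}$. This reverses the order and gives exactly $\tr(\M\D^\top\SigW\D)$.

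As a sanity check, setting $\SigW=\sigma^2\I$ makes the trace $\sigma^2\tr(\M\D^\top\D)$, which equals $\sigma^2\tr(\G\D^{\dagger\top})$ and so recovers \Cref{thm:sure_white}.
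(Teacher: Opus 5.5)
Your proposal is correct and follows the paper's approach. Like the paper, you rerun the white-noise SURE derivation with $\mathbb{E}\{\bm{w}^\top\mathbf{B}\bm{w}\}=\tr(\mathbf{B}\SigW)$ in place of $\sigma^2\tr(\mathbf{B})$, then use $\D^\dagger=(\D^\top\D)^{-1}\D^\top$ and trace cyclicity to get a $P\times P$ trace; the paper works with the transpose $\mathbf{A}=\D^{\dagger\top}\G=\D\M\D^\top$ (valid since $\SigW$ is symmetric), which gives $\M$ with the inverses already in the stated order, so your extra transposition step is unnecessary but harmless.
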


\begin{proof}[Proof sketch]
The result follows from \Cref{thm:sure_white} by replacing $\sigma^2\tr(\mathbf{A})$ with $\tr(\mathbf{A}\SigW)$, where $\mathbf{A} \hsp{-2}{-2}{=} \D^{\dagger\top}\G \in \mathbb{R}^{T\times T}$. 
Direct evaluation in this $T\times T$ space is intractable ($T \sim 10^4$--$10^5$), but using $\D^{\dagger\top} \hsp{-2}{-2}{=} \D(\D^\top\D)^{-1}$ gives
\begin{equation}
\mathbf{A} \hsp{-2}{-2}{=} \D(\D^\top\D)^{-1}(\D^\top\D \hsp{-1.5}{-1.5}{+} \Smat)^{-1}\D^\top \hsp{-2}{-2}{=} \D\M\D^\top,
\label{eq:A_factored}
\end{equation}
where $\M \hsp{-2}{-2}{\in} \mathbb{R}^{P \times P}$. The cyclic trace property then yields
\begin{equation}
\tr(\D\,\M\,\D^\top\SigW) = \tr(\M\,\D^\top\SigW\D),
\label{eq:trace_reduction}
\end{equation}
reducing the $T \times T$ computation to a $P \times P$ trace. See supplementary material for the full derivation.
\end{proof}

Note that this result requires only that $\W$ has zero mean and known second-order statistics $\SigW$, no distributional assumption is needed.
Moreover, dropping the constant first term, the optimal regularization minimizes the sum over all channels:
\begin{equation}
\hspace{-3pt} \Gam_{\hspace{-1pt} \text{opt}} \hsp{-2.5}{-2.5}{=} \argmin_{\Gam} \hspace{-3pt} \sum_{c=1}^{C} \hspace{-2.5pt} \left\{ \hspace{-2pt} |\hat{\bm{\phi}}_c|^2 \hsp{-2.5}{-2.5}{-} 2\hat{\bm{\phi}}_c^\top \hspace{-1pt} \D^\dagger \hspace{-1pt} \bm{x}_c \hsp{-2.5}{-2.5}{+} 2\tr(\hspace{-1pt} \M \D^\top \hspace{-1pt} \SigW \hspace{-1pt} \D \hspace{-1pt}) \hspace{-2.5pt} \right\} \hspace{-3pt}.
\label{eq:optimization}
\end{equation}

Finally, when $\SigW \hsp{-2}{-2}{=} \sigma^2 \I_T$, we have $\D^\top\SigW\D \hsp{-2}{-2}{=} \sigma^2 \D^\top\D$, and the trace term reduces to
$2\sigma^2 \tr(\M\D^\top\D) \hsp{-2}{-2}{=} 2\sigma^2 \tr(\G\D^{\dagger\top})$, recovering Theorem~\ref{thm:sure_white}.

\begin{remark}[Structure of $\D^\top\SigW\D$]\label{rem:sigw_structure}
The matrix $\D^\top\SigW\D \in \mathbb{R}^{P\times P}$ has a concrete interpretation. For AR(1) noise with parameter $\varphi$ and marginal variance $\sigma^2$, we have $(\SigW)_{st} = \sigma^2 \varphi^{|s-t|}$, so:
\begin{equation}
(\D^\top\SigW\D)_{(m,a),(n,b)} = \sum_{k=1}^{N_m}\sum_{l=1}^{N_n} \sigma^2\,\varphi^{|t_k^{(m)} + a - t_l^{(n)} - b|},
\label{eq:DtSigD}
\end{equation}
where $(m,a)$ indexes event type $m$ at template offset $a \in \{0,\ldots,L{-}1\}$, and the sum runs over all pairs of event occurrences.
Hence, this matrix captures event timing geometry:
\begin{itemize}
\item When events $k$ and $l$ are well separated ($|t_k^{(m)} - t_l^{(n)}| \gg 1/|\!\log\varphi|$), the term $\varphi^{|\cdot|}$ is near zero, contributing negligibly.
\item When events are densely packed, many pairs contribute large values, increasing the effective trace term and driving SURE toward stronger regularization.
\item When $\varphi = 0$ (white noise), $\D^\top\SigW\D = \sigma^2 \D^\top\D$, and Theorem~\ref{thm:sure_colored} reduces to Theorem~\ref{thm:sure_white}.
\end{itemize}
\end{remark}

\vspace{-3mm}
\subsection{Kernel Parameterization}
\label{sec:kernel}

\Cref{thm:sure_white} and \Cref{thm:sure_colored} hold for \emph{any} regularization matrix $\Gam$, which brings the question:
\emph{What type of regularization matrix one should use?}
Since $\Gam$ acts independently on each column of $\bm{\Phi}$ (i.e., on each template waveform),
and each column is a one-dimensional signal along the temporal axis, we parameterize $\Gam$ as a convolution matrix defined by a kernel $\bm{h}$.
Then, the optimizers solving for $\Gam_{\text{opt}}$ find not only the magnitude of the regularization but also its spectral shape.
To control the number of free parameters in our optimization problem we further impose the following constraints:
\begin{itemize}
\item \textbf{Kernel length} $Q$: controls the filter order (typically $Q \hsp{-2}{-2}{\in} \{3, 5\}$).
\item \textbf{Symmetry}: constraining $\bm{h}$ to be symmetric halves the number of free parameters and ensures a zero-phase frequency response.
\end{itemize}

For example, a symmetric kernel of length 3 has the form $\bm{h} = [a, b, a]$ with 2 free parameters, compared to the fixed second-order difference $[1, -2, 1]$.
The optimization in~\eqref{eq:optimization} is performed over these kernel coefficients using a general-purpose optimizer (e.g., L-BFGS-B).

A key finding (Sections~\ref{sec:simulation} and~\ref{sec:real_data}) is that the learned kernels consistently have \emph{non-zero DC gain},
meaning they provide ridge-like shrinkage \emph{in addition} to smoothness enforcement.
This is in contrast to the standard practice of using only temporal regularization via finite differences, which have zero DC gain by construction.
Empirical Bayes, GCV, and L-curve can only optimize the scalar magnitude $\lambda$ for a fixed penalty structure; they cannot discover this combined structure.

\vspace{-3mm}
\subsection{Noise Parameter Estimation}
\label{sec:noise}

\Cref{thm:sure_colored} holds for any known covariance $\SigW$, but practical use requires a parametric model with few estimable parameters.
Bandpass-filtered biosignal noise is well characterized by a first-order autoregressive process, whose covariance is fully determined by two scalars:
the marginal variance $\sigma^2$ and the lag-1 autocorrelation $\varphi$.
AR(1) models have a long track record for capturing temporal autocorrelation in neuroimaging,
notably in fMRI where they were adopted to correct GLM statistics~\cite{friston1995analysis,worsley1995analysis}.
In the biosignal setting, the AR(1) model captures the dominant effect of bandpass filtering (temporal smoothing that correlates adjacent samples),
while remaining estimable from short pre-stimulus baseline segments.

\textbf{Variance ($\sigma^2$).} We extract pre-stimulus baseline segments (e.g., 100\,ms before each event onset),
concatenate them, and compute the sample variance. These segments precede the event-locked response and thus contain only ongoing noise.

\textbf{AR(1) coefficient ($\varphi$).} From the same baseline segments, we compute the lag-1 sample autocorrelation and take the median across channels for robustness.
Typical values for bandpass-filtered biosignal data range from $\varphi \approx 0.88$ (P300 dataset, 1--20\,Hz at 128\,Hz) to $\varphi \approx 0.98$ (auditory ERP dataset, 1--40\,Hz at 600.6\,Hz), reflecting the ratio of passband to sampling rate.
When noise is approximately white by construction (e.g., simulation) or when sufficient events exist to estimate variance from trial-to-trial variability,
we use \emph{pairwise} noise estimation: $\hat{\sigma}^2 = \mathrm{Var}(\bm{x}_i - \bm{x}_j)/2$ averaged over random epoch pairs, which cancels the template exactly.

The AR(1) colored estimator is the more general choice; when $\varphi = 0$, it reduces to the standard white-noise case.
Practical guidance on choosing between these estimators is given in Section~\ref{sec:discuss_noise}.

\begin{figure}[t]
\centering
\includegraphics[width=0.8\myfigwidth]{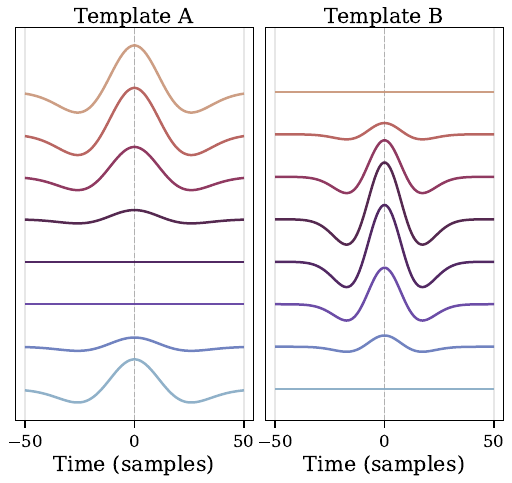}
\vspace{-2mm}
\caption{Synthetic templates used in the simulation. Each template is a Ricker wavelet ($w \hsp{-2}{-2}{=} 15$ for Template~A,
$w \hsp{-2}{-2}{=} 10$ for Template~B) modulated by a Hanning spatial profile across $C \hsp{-2}{-2}{=} 8$ channels
(centered for Template~B; shifted by 3 channels for Template~A).
Channels are vertically offset for visibility; same gap is used in both panels, so amplitudes are directly comparable across templates.}
\label{fig:templates}
\vspace{-3mm}
\end{figure}

\vspace{-2mm}
\section{Simulation Study}
\label{sec:simulation}

\vspace{-1mm}
\subsection{Setup}

We generate synthetic data following the model in~\eqref{eq:model}.
Two templates are constructed from Ricker wavelets of duration $L \hsp{-2}{-2}{=} 100$ samples with width parameters 15 and 10,
multiplied by a Hanning spatial profile across $C \hsp{-2}{-2}{=} 8$ channels (\Cref{fig:templates}).
Events are placed at random times with exponentially distributed inter-event intervals (mean interval = 40 samples), allowing temporal overlap.
Noise is i.i.d.\ Gaussian ($\varphi \hsp{-2}{-2}{=} 0$ by construction), and noise variance is estimated using pairwise estimation.

The estimation quality is measured by the template estimation SNR, defined as the average over event types:
$\text{SNR} \hsp{-2}{-2}{=} \frac{1}{M}\sum_{m=1}^{M} 10\log_{10}\frac{|\bm{\phi}_m|^2}{|\bm{\phi}_m - \hat{\bm{\phi}}_m|^2} \; [\text{dB}],$
where $\bm{\phi}_m$ and $\hat{\bm{\phi}}_m$ are the true and estimated templates for event type $m$, and norms are Frobenius (summing over all time samples and
channels).

\vspace{-2mm}
\subsection{Competing Methods}

We compare the following template estimation approaches:
\begin{itemize}
\item \textbf{Average}: simple epoch averaging.
\item \textbf{OLS}: ordinary least squares on the linear model.
\item \textbf{GCV}: generalized cross-validation~\cite{golub1979gcv} for $\lambda$ selection with second-order differences.
\item \textbf{L-curve}: maximum curvature of the residual-vs-solution norm tradeoff~\cite{hansen1992lcurve}.
\item \textbf{Emp.\ Bayes}: evidence maximization~\cite{mackay1992bayesian} for $\lambda$ selection.
\item \textbf{SURE($\lambda$)}: proposed method, optimizing $\lambda$ only with fixed $[1,-2,1]$ kernel.
\item \textbf{SURE(kernel)}: proposed method, optimizing the full kernel (symmetric, length~3).
\end{itemize}

All regularized methods, except SURE(kernel) use the second-order difference structure for $\Gam$.
GCV, L-curve, and Emp.\ Bayes optimize $\lambda$ for this fixed structure.
SURE($\lambda$) optimizes $\lambda$ via SURE, while SURE(kernel) jointly optimizes the kernel coefficients.

\vspace{-3mm}
\subsection{Results}

\subsubsection{Influence of Noise Level}

A noise-level sweep ($K \hsp{-2}{-2}{=} 20$, $\sigma \in [0.01, 0.5]$) confirms that SURE(kernel) attains the highest SNR for $\sigma \hsp{-2}{-2}{\geq} 0.05$, 
with up to $+1.0$\,dB over Empirical Bayes at high noise. At very low noise the pairwise variance estimator over-regularizes (residual template overlap inflates $\hat\sigma^2$), 
so GCV and Emp.\ Bayes, which estimate the noise level implicitly, are preferable in that regime. Full curves are reported in the supplementary material (Sec.~S3).

\subsubsection{Influence of Number of Events}

\Cref{fig:nevents} shows the SNR versus the number of events per template type, with $\sigma \hsp{-2}{-2}{=} 0.1$.
SURE(kernel) outperforms all methods at every $K$, with $+0.8$\,dB over Emp.\ Bayes at $K \hsp{-2}{-2}{=} 4$ (10.7 vs.\ 9.9\,dB) and $+0.9$\,dB at $K \hsp{-2}{-2}{=} 100$ (23.6 vs.\ 22.7\,dB).
L-curve fails to identify a regularization corner at $K \hsp{-2}{-2}{<} 15$, defaulting to the un-regularized OLS solution - a known limitation of curvature-based selection at small sample sizes.

\begin{figure*}[t]
\centering
\includegraphics[width=\doublefigwidth]{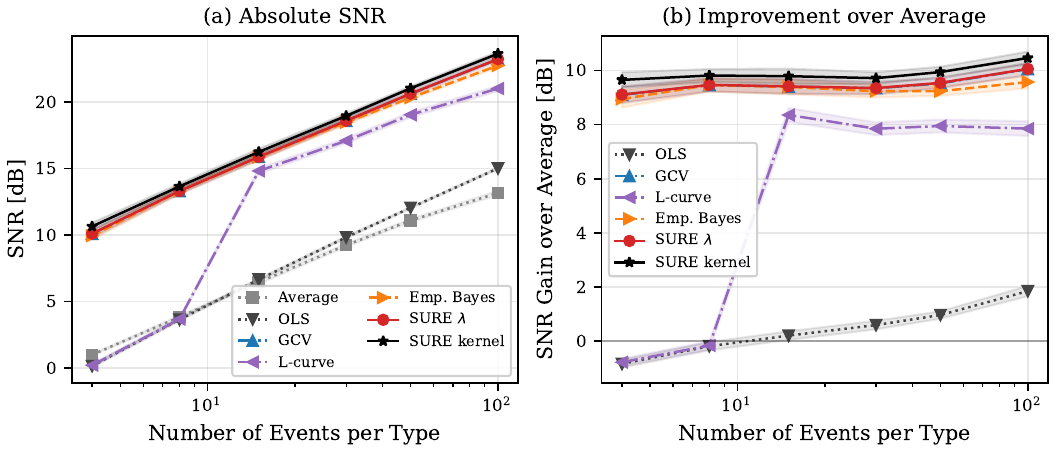}
\vspace{-2mm}
\caption{Simulation: SNR versus number of events per type ($\sigma \hsp{-2}{-2}{=} 0.1$).
(a)~Absolute SNR.
(b)~SNR gain over averaging. SURE(kernel) outperforms all methods at every $K$,
with $+0.8$\,dB over Emp.\ Bayes at $K \hsp{-2}{-2}{=} 4$ and $+0.9$\,dB at $K \hsp{-2}{-2}{=} 100$.
Shaded bands show 95\% confidence intervals across 20 repetitions.}
\label{fig:nevents}
\vspace{-3mm}
\end{figure*}

On this simulation, symmetric kernels of length $Q \hsp{-2}{-2}{=} 3$ and $Q \hsp{-2}{-2}{=} 5$ produced nearly identical learned shapes (close to the standard $[1,-2,1]$ stencil), 
and dropping the symmetry constraint at $Q \hsp{-2}{-2}{=} 3$ recovered a symmetric solution; we therefore use Sym-3 throughout. A cross-dataset comparison of learned kernels is given in \Cref{sec:learned_structure}.

\subsubsection{Influence of Noise Autocorrelation}

To validate \Cref{thm:sure_colored}, we generate AR(1) noise with $\varphi \hsp{-2}{-2}{\in} \{0, 0.15, 0.3, 0.5, 0.7, 0.85\}$ at $K \hsp{-2}{-2}{=} 15$, $\sigma \hsp{-2}{-2}{=} 0.1$, 
and compare white SURE, colored SURE (using $\D^\top\SigW\D$ with the true $\varphi$), and the oracle (\Cref{fig:colored_sim}).
Colored SURE tracks the oracle at all $\varphi$; white SURE loses up to $0.7$\,dB at intermediate $\varphi \approx 0.3$--$0.5$ where it under-counts the effective degrees of freedom 
(the gap closes at high $\varphi$, where the cost surface flattens).
We use SURE($\lambda$) here for clarity; the correction applies identically to SURE(kernel) since the trace term is independent of the parameterization.
The simulated gap is modest because the AR(1) coefficients are moderate, but the correction is the right default: it is exact, 
adds negligible cost, and reduces to the white-noise case at $\varphi \hsp{-2}{-2}{=} 0$.
On real biosignals, autocorrelation is much stronger ($\varphi \approx 0.88$--$0.98$; \Cref{sec:real_data}), and the practical benefit becomes substantial.

\begin{figure}[t]
\centering
\includegraphics[width=\myfigwidth]{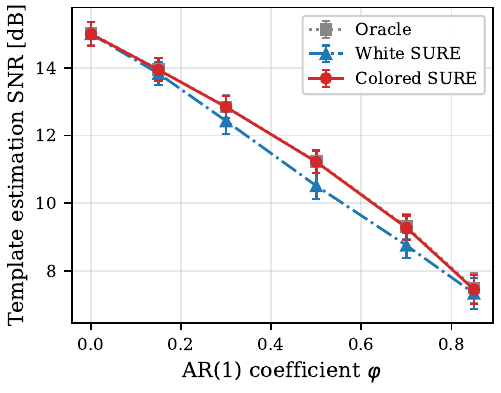}
\vspace{-2mm}
\caption{Simulation with AR(1) noise: reconstruction SNR versus autocorrelation coefficient~$\varphi$ ($K \hsp{-2}{-2}{=} 15$ events per type,
$\sigma \hsp{-2}{-2}{=} 0.1$, overlapping events).
Colored SURE (Theorem~\ref{thm:sure_colored}) closely tracks the oracle at all~$\varphi$; white SURE loses up to 0.7\,dB at intermediate $\varphi$.}
\label{fig:colored_sim}
\vspace{-3mm}
\end{figure}

\vspace{-3mm}
\section{Real Data Analysis}
\label{sec:real_data}

We evaluate the proposed method on three bio-signal modalities. In each case, we use the template estimated from all available events via Emp.\ Bayes as a reference,
then randomly subsample $K$ events per class ($K \hsp{-2}{-2}{\in} \{5, 10, 20, 50\}$) and measure the Frobenius distance to the reference.
The sub-sampled events are a subset of the full pool; at small $K$ ($K \hsp{-2}{-2}{\ll} N_{\text{total}}$), overlap with the reference is negligible. 
We use the Emp.\ Bayes estimate as reference because it provides a lower-variance estimate than OLS; method rankings are unchanged when OLS from all events is used as reference.
Results are averaged over 10--20 random sub-samples.
All datasets undergo standard modality-specific pre-processing (band-pass filtering, and for EEG, artifact rejection) prior to template estimation;
no additional featurization is applied.
Before reporting distance metrics, \Cref{fig:templates_real} shows the estimated waveforms themselves on two representative datasets, 
contrasting the slow P300 wave with the sharp ECG QRS complex.

\begin{figure*}[t]
\centering
\includegraphics[width=\doublefigwidth]{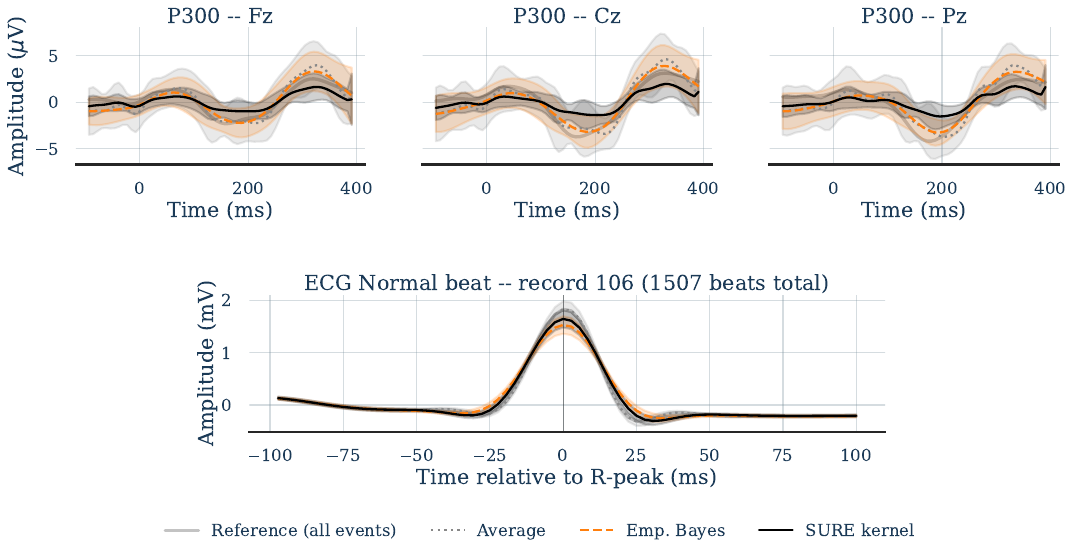}
\vspace{-2mm}
\caption{Estimated templates at $K \hsp{-2}{-2}{=} 5$ (mean and $\pm 1$\,std bands across 20 random subsamples) on two real datasets. 
Top row: P300 target at Fz/Cz/Pz (BNCI2014-009, subject~1). Bottom: ECG Normal beat around the R-peak (MIT-BIH record~106). On the P300 panels, 
SURE(kernel) yields visibly tighter uncertainty bands around the reference than Emp.\ Bayes, especially around the 300\,ms target component. 
On the ECG, Emp.\ Bayes over-smooths the R-peak while SURE(kernel) preserves the QRS sharpness. 
Record~106 is shown for visualization because it exhibits greater beat-to-beat variability than the benchmark record; ECG metrics in \Cref{sec:ecg} use record~119.}
\label{fig:templates_real}
\vspace{-3mm}
\end{figure*}

\vspace{-3mm}
\subsection{Auditory Event-Related Potentials (EEG)}
\label{sec:eeg}

We use the MNE sample dataset~\cite{gramfort2013mne}, which contains auditory and visual event-related potentials recorded with a 306-channel Neuromag system.
We select 59 EEG channels and two auditory event types (left and right), providing 72 and 73 events respectively ($N_{\text{total}} = 145$).
The data are bandpass-filtered to 1--40\,Hz at a sampling rate of 600.6\,Hz. Template duration is set to $\pm200$\,ms (240~samples).
The parameters of the noise model are estimated from pre-stimulus baselines with $\hat{\varphi} \hsp{-2}{-2}{\approx} 0.98$ 
(reflecting the narrow 1--40\,Hz passband relative to the 600.6\,Hz sampling rate), while we use the colored-noise SURE.

Fig.~\ref{fig:eeg} shows the template distance as a function of $K$.
At $K \hsp{-2}{-2}{=} 5$, SURE(kernel) reduces template distance by 53\% compared to Emp.\ Bayes ($2.92 \hsp{-2}{-2}{\times} 10^{-4}$ vs.\ $6.17 \hsp{-2}{-2}{\times} 10^{-4}$)
and by 59\% compared to simple averaging ($7.11 \hsp{-2}{-2}{\times} 10^{-4}$). This difference is statistically significant
(Wilcoxon signed-rank, $p \hsp{-2}{-2}{=} 0.002$, $N \hsp{-2}{-2}{=} 10$). SURE($\lambda$) achieves $6.81 \hsp{-2}{-2}{\times} 10^{-4}$, which is slightly worse
than Emp.\ Bayes at $K \hsp{-2}{-2}{=} 5$, indicating that the kernel structure, not just the colored-noise correction,
is the primary driver of SURE(kernel)'s advantage on this dataset.
At large $K$ ($K \hsp{-2}{-2}{\geq} 50$), SURE(kernel)'s distance reaches a floor comparable to Emp.\ Bayes, confirming that
the method is most beneficial in the low-data regime.

\begin{figure*}[t]
\centering
\includegraphics[width=\doublefigwidth]{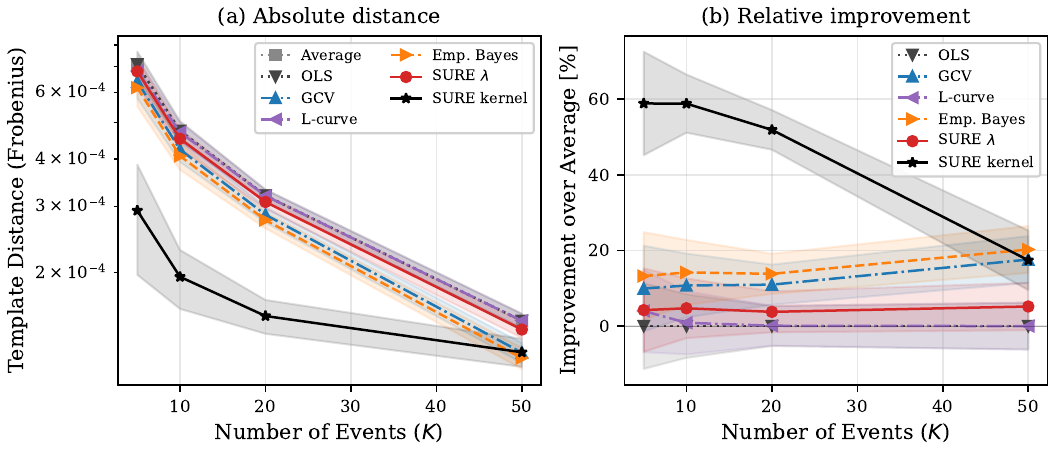}
\vspace{-2mm}
\caption{Auditory ERP: template distance versus number of events $K$. (a)~Absolute Frobenius distance to the reference template (log scale).
(b)~Relative improvement over averaging. At $K \hsp{-2}{-2}{=} 5$, SURE(kernel) achieves 59\% improvement over averaging, compared to 13\% for Emp.\ Bayes.
Shaded bands show 95\% confidence intervals across 20 random subsamples.}
\label{fig:eeg}
\vspace{-3mm}
\end{figure*}

\vspace{-3mm}
\subsection{P300 Brain--Computer Interface}
\label{sec:p300}

We use the BNCI2014-009 dataset~\cite{bnci2014009} from the MOABB benchmark suite~\cite{jayaram2018moabb}, which contains P300 speller data.
We analyze subject~1 with 8~EEG channels at 128\,Hz, bandpass-filtered to 1--20\,Hz.
The paradigm produces 1728 events (288 target, 1440 non-target). Template duration is 800\,ms ($\pm400$\,ms around the stimulus, 102~samples).
The parameters of the noise model are estimated from pre-stimulus baselines: $\hat{\varphi} \approx 0.88$, while we use the colored-noise SURE.
Fig.~\ref{fig:p300} shows the template distance as a function of $K$.
At $K \hsp{-2}{-2}{=} 5$, SURE(kernel) reduces template distance by 27\% compared to Emp.\ Bayes
($5.08 \hsp{-2}{-2}{\times} 10^{-5}$ vs.\ $6.92 \hsp{-2}{-2}{\times} 10^{-5}$) and by 46\% compared to averaging ($9.47 \hsp{-2}{-2}{\times} 10^{-5}$).
This difference is statistically significant (Wilcoxon signed-rank, $p \hsp{-2}{-2}{=} 2 \hsp{-2}{-2}{\times} 10^{-6}$, $N \hsp{-2}{-2}{=} 20$).
The P300 paradigm is particularly relevant for the colored-noise extension: with 1728 events in a short recording, events are densely packed,
and the $\D^\top\SigW\D$ matrix captures substantial inter-event noise correlation that the white-noise SURE would miss.
SURE($\lambda$) achieves $7.04 \hsp{-2}{-2}{\times} 10^{-5}$, comparable to Emp.\ Bayes, confirming the importance of the kernel structure.

To assess generalizability beyond the subject analyzed above (subject~1), we repeat the $K \hsp{-2}{-2}{=} 5$ analysis on all 10~subjects in the BNCI2014-009 dataset.
Table~\ref{tab:p300_subjects} summarizes the results.
SURE(kernel) achieves the lowest mean and median template distance across subjects, with a cross-subject standard deviation 17$\times$ smaller than Emp.\ Bayes.
Emp.\ Bayes achieves a lower distance on 4 of 10 subjects, but it exhibits occasional over-regularization failures that inflate its mean and variance;
SURE(kernel) is competitive on every subject and avoids such failures, making it the more reliable choice in practice.

\begin{table}[t]
\centering
\caption{P300 template distance ($\times 10^{-5}$) at $K=5$ aggregated across all 10 BNCI2014-009 subjects (20 random subsamples each).}
\label{tab:p300_subjects}
\vspace{-1mm}
\begin{tabular}{l c c}
\toprule
Method & Mean $\pm$ Std & Median \\
\midrule
Average         & $2.92 \pm 0.31$ & 2.84 \\
GCV             & $2.14 \pm 0.31$ & 2.04 \\
Emp.\ Bayes     & $4.30 \pm 6.29$ & 2.25 \\
SURE($\lambda$) & $2.08 \pm 0.39$ & 2.13 \\
SURE(kernel)    & $\mathbf{1.91 \pm 0.37}$ & \textbf{1.88} \\
\bottomrule
\end{tabular}
\vspace{-3mm}
\end{table}

\begin{figure*}[t]
\centering
\includegraphics[width=\doublefigwidth]{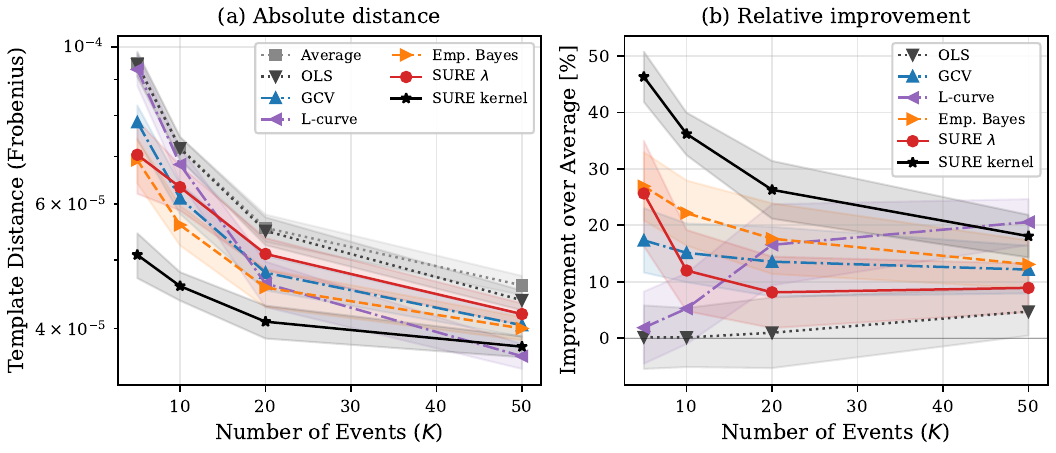}
\vspace{-2mm}
\caption{P300 (BCI speller): template distance versus number of events $K$.
(a)~Absolute Frobenius distance (log scale). SURE(kernel) achieves the lowest distance at every~$K$. (b)~Relative improvement over averaging.
At $K=5$, SURE(kernel) improves by 46\% over averaging versus 27\% for Emp.\ Bayes. Shaded bands show 95\% confidence intervals across 20 random subsamples.}
\label{fig:p300}
\vspace{-3mm}
\end{figure*}

\vspace{-3mm}
\subsection{ECG Beat Morphology}
\label{sec:ecg}

We use record 119 from the MIT-BIH Arrhythmia Database~\cite{moody2001mitbih,goldberger2000physionet},
which contains Normal and premature ventricular contraction (PVC) beats recorded at 360\,Hz with a single ECG channel, bandpass-filtered to 0.5--40\,Hz.
The recording contains 1987 annotated beats. Template duration is 694\,ms ($\pm347$\,ms around the R-peak, 250~samples).
Since ECG baselines between adjacent beats may be contaminated by P-waves and T-waves from neighboring beats,
we use pairwise noise estimation for this dataset, which cancels the template contribution exactly.
Fig.~\ref{fig:ecg} shows the template distance as a function of $K$.
The ECG results reveal a different pattern from the auditory ERP and P300 experiments.
At $K \hsp{-2}{-2}{=} 5$, SURE(kernel) achieves a template distance of 1.22, which is essentially equal to simple averaging (1.20).
In contrast, Emp.\ Bayes degrades to 1.69, 41\% worse than averaging, while L-curve and GCV also over-regularize substantially.
SURE(kernel) reduces template distance by 28\% compared to Emp.\ Bayes (Wilcoxon signed-rank, $p \hsp{-2}{-2}{=} 2 \hsp{-2}{-2}{\times} 10^{-6}$, $N \hsp{-2}{-2}{=} 20$).

This result highlights the robustness of SURE-based estimation for high-frequency ECG profiles.
While Emp.\ Bayes, GCV, and L-curve over-regularize and smooth the sharp QRS complex due to complexity penalties,
SURE targets MSE directly to preserve these morphological features.
Consequently, the SURE framework safely ``backs off'' to near-averaging performance when regularization is not beneficial, preventing degradation.
Table~\ref{tab:summary} summarizes the template distances across all three modalities for $K \hsp{-2}{-2}{=} 5$.

\begin{figure*}[t]
\centering
\includegraphics[width=\doublefigwidth]{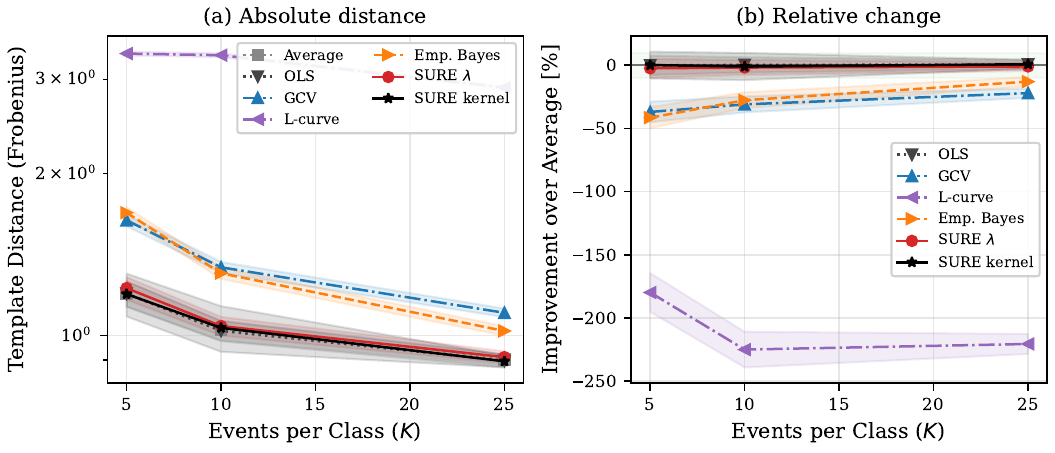}
\vspace{-2mm}
\caption{ECG (Normal + PVC morphology): ECG template distance vs. events per class $K$ . (a) Absolute distance (log scale), where alternative methods yield larger errors than averaging. 
(b) Relative change vs. averaging. SURE-based methods track averaging within 1\%, successfully avoiding the catastrophic over-regularization seen in other frameworks. 
Shaded bands show 95\% confidence intervals.}
\label{fig:ecg}
\vspace{-3mm}
\end{figure*}

\begin{table}[t]
\caption{Template distance at $K \hsp{-2}{-2}{=} 5$ across three modalities (lower is better).
Bold indicates best method. $p$-values from paired Wilcoxon signed-rank test, SURE(kernel) vs.\ Emp.\ Bayes.
Auditory ERP entries are scaled by $10^4$, while P300 entries are scaled by $10^5$.}
\label{tab:summary}
\centering
\begin{tabular}{lccc}
\toprule
Method & Aud.\ ERP & P300 & ECG \\
\midrule
Average & $7.11$ & $9.47$ & 1.20 \\
GCV & $6.40$ & $7.83$ & 1.64 \\
L-curve & $6.82$ & $9.29$ & 3.35 \\
Emp.\ Bayes & $6.17$ & $6.92$ & 1.69 \\
SURE($\lambda$) & $6.81$ & $7.04$ & 1.22 \\
\textbf{SURE(kernel)} & $\mathbf{2.92}$ & $\mathbf{5.08}$ & \textbf{1.22} \\
\midrule
SURE(kernel) vs.\ Emp.\ Bayes & $-53\%$ & $-27\%$ & $-28\%$ \\
$p$-value (Wilcoxon) & 0.002 & $2{\times}10^{-6}$ & $2{\times}10^{-6}$ \\
\bottomrule
\end{tabular}
\end{table}

\vspace{-3mm}
\section{Discussion}
\label{sec:discussion}

\subsection{Learned Penalty Structure Across Modalities}
\label{sec:learned_structure}

\Cref{fig:kernels_cross} compares the normalized frequency response of kernels learned by SURE(kernel) on each dataset.
The learned structures differ substantially across modalities, reflecting the distinct spectral characteristics of each signal:

\begin{itemize}
    \item \textbf{Simulation and ECG}: the learned kernels closely resemble scaled versions of the standard $[1,-2,1]$ difference operator,
    with high Nyquist-to-DC gain ratios (178$\times$ and 120$\times$, respectively).
    This indicates that smoothness enforcement dominates, consistent with well-defined, relatively sharp template features (Ricker wavelets and QRS complexes).

    \item \textbf{Auditory ERP}: the kernel exhibits a moderate Nyquist-to-DC ratio (2.9$\times$), combining smoothness with a non-trivial ridge component.
    The auditory ERP spans a broader frequency range than the P300, and the learned kernel reflects this intermediate structure.

    \item \textbf{P300}: the kernel is dominated by ridge-like shrinkage (Nyquist-to-DC ratio of 0.2$\times$), penalizing overall amplitude more than high-frequency
    roughness. This is consistent with the slow, smooth character of the P300 wave, where the primary estimation challenge is amplitude rather than shape.
\end{itemize}

These differences cannot be captured by a scalar-$\lambda$ method.
The ability to learn the penalty structure automatically is a unique advantage of the proposed framework.

\begin{figure}[t]
\centering
\includegraphics[width=\myfigwidth]{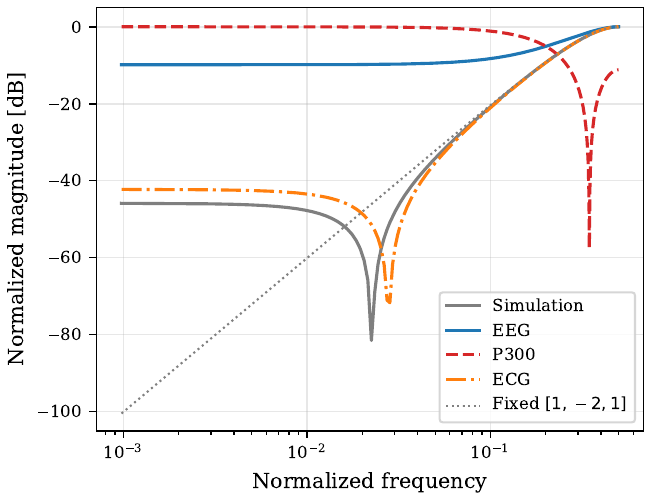}
\vspace{-2mm}
\caption{Normalized frequency response of regularization kernels learned by SURE(kernel) on each dataset ($K=5$). The gray dotted line shows the fixed $[1,-2,1]$
reference. The learned structures range from pure smoothness (Simulation, ECG) to a mix of smoothness and ridge (auditory ERP) to ridge-dominated (P300), reflecting the spectral
characteristics of each signal.}
\label{fig:kernels_cross}
\vspace{-3mm}
\end{figure}

\vspace{-2mm}
\subsection{Comparison with Empirical Bayes}
\label{sec:discuss_eb}

Three complementary factors explain SURE(kernel)'s advantage over Emp.\ Bayes (\Cref{tab:summary}).
First, SURE targets MSE directly, while Emp.\ Bayes maximizes the marginal likelihood, whose log-determinant complexity penalty (Occam's razor) 
can over-regularize when templates contain sharp features, visible in the ECG result, where Emp.\ Bayes degrades by 41\%.
Second, kernel optimization learns the penalty \emph{structure} (the smoothness--ridge weighting), whereas Emp.\ Bayes, GCV, and L-curve are confined to a scalar $\lambda$ on a fixed operator; 
the learned kernels exhibit non-zero DC gain, confirming the optimum is a mix of smoothness and ridge.
Third, Emp.\ Bayes implicitly handles colored noise via its variance update, which explains why it can outperform white SURE on real data; 
the $\D^\top\SigW\D$ correction levels the noise model and lets the first two advantages dominate.

\vspace{-4mm}
\subsection{Noise Estimation: A Design Choice}
\label{sec:discuss_noise}

SURE requires noise parameters as input, both a weakness (extra inputs required) and a strength (the user can adapt the noise model to the application). We recommend:
\begin{itemize}
\item \textbf{AR(1) colored}: default for biosignal data with temporal autocorrelation ($\varphi \hsp{-2}{-2}{>} 0.5$), especially at low $K$ ($\leq 20$).
\item \textbf{Pairwise}: for white or near-white noise, for ECG (where adjacent-beat contamination biases baseline estimates), or when many events are available.
\end{itemize}
Emp.\ Bayes avoids this input requirement, which makes it easier to deploy but less flexible when the noise model is well characterized.

\vspace{-3mm}
\subsection{Computational Considerations}
\label{sec:discuss_computation}

Building $\D^\top\SigW\D$ requires $O(P^2 N_{\text{events}}^2)$ operations, a one-time cost per dataset.
For the auditory ERP dataset with $P \hsp{-2}{-2}{=} 480$ and $N_{\text{events}} \hsp{-2}{-2}{=} 145$, this takes approximately 1 second.
Each SURE evaluation requires $O(P^3)$ for the matrix solves, the same cost as the white-noise version since the $\M$ computation dominates.
SURE(kernel) with a symmetric 3-tap kernel has 2 free parameters and typically requires ${\sim}50$ optimizer iterations,
for a total of ${\sim}100$ matrix inversions of size $P \hsp{-2}{-2}{\times} P$.
This is comparable to Emp.\ Bayes (which also iterates) and faster than GCV with dense grid search. 
The method is most practical for low-to-moderate channel counts: for the P300 dataset ($P = 204$), SURE(kernel) runs in approximately 30\,s, 
while for the 60-channel auditory ERP dataset ($P = 720$) it takes approximately 10\,min. 
For high-channel-count recordings, spatial dimensionality reduction (e.g., xDAWN~\cite{rivet2009xdawn}) can be applied before template estimation to reduce $P$.

\vspace{-3mm}
\subsection{Limitations}
\label{sec:discuss_limits}

Several limitations should be noted.
First, the AR(1) noise model is a first-order approximation. Bandpass-filtered noise has oscillatory autocorrelation structure that AR(1) captures only approximately.
Higher-order AR models could provide better approximation at the cost of additional parameters, but we found AR(1) sufficient for the datasets considered.

Second, for real-data evaluation the reference template is estimated from all available events, introducing evaluation noise at high $K$. 
This explains why improvements diminish as $K$ approaches the total number of events.

Third, template non-stationarity (trial-to-trial variability) is not modeled. Our framework assumes a deterministic template, 
which is standard in the literature but may underestimate the effective noise variance in paradigms with substantial habituation or learning effects.

Finally, the method applies to temporal regularization only and is complementary to spatial filtering methods such as xDAWN~\cite{rivet2009xdawn}. 
Joint spatio-temporal optimization is a promising direction for future work.

\vspace{-3mm}
\section{Conclusion}
\label{sec:conclusion}

We presented a data-driven framework for optimizing Tikhonov regularization for event template estimation in biosignals.
By parameterizing the regularization operator as a convolution kernel and minimizing an unbiased estimate of the MSE, the proposed method
jointly determines both the structure and magnitude of the regularization from the data.
The colored-noise extension bridges SURE theory and biosignal practice, enabling principled regularization optimization
for temporally correlated noise with only two additional parameters.

Cross-modality validation demonstrates consistent improvements over Empirical Bayes at $K \hsp{-2}{-2}{=} 5$ events per class:
53\% on auditory event-related potentials, 27\% on P300 brain--computer interface data, and 28\% on ECG beat morphology.
The method is most beneficial precisely where it is most needed - when events are scarce and regularization matters most.
On ECG, where other methods degrade substantially, SURE-based methods demonstrate robustness by correctly backing off to near-averaging performance.

\bibliographystyle{IEEEtran}
\vspace{-2mm}
\bibliography{references}

\clearpage
\includepdf[pages=-]{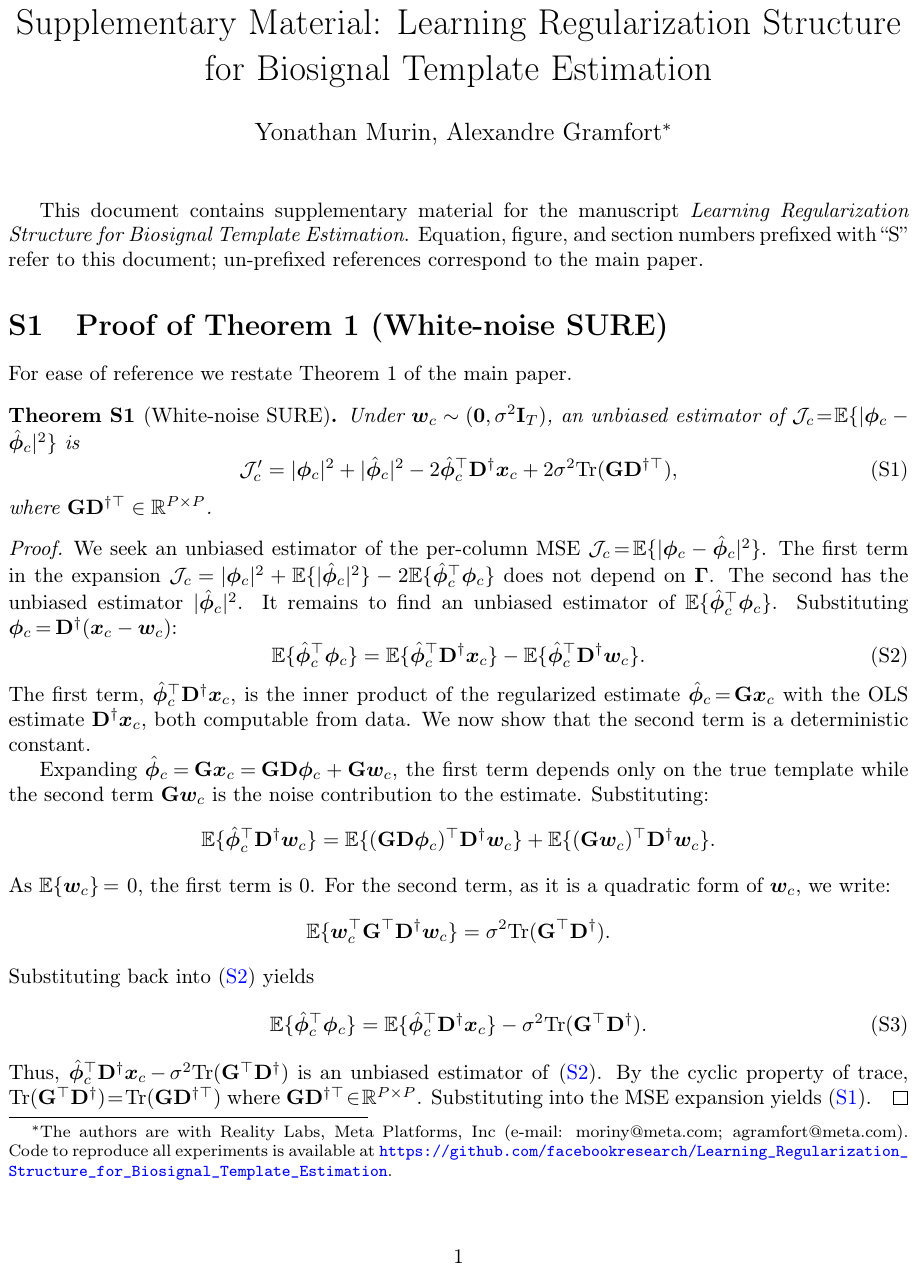}

\end{document}